\documentclass[submission,copyright,creativecommons]{eptcs}
\providecommand{\event}{FROM 2026}

\usepackage{iftex}

\ifpdf
  \usepackage{underscore} 
  \usepackage[T1]{fontenc}     
\else
  \usepackage{breakurl}
\fi
\usepackage[utf8]{inputenc}
\usepackage{amsmath,amssymb,amsthm,mathtools}
\usepackage{enumitem}
\usepackage{url}

\title{Subsumption-Free Private-Pivot Learning in Resolvable Network-Based SAT Solving}
\author{G\'abor Kusper
\institute{Faculty of Informatics, Eszterh\'azy K\'aroly Catholic University, Eger, Hungary}
\email{kusper.gabor@uni-eszterhazy.hu}
}

\def\titlerunning{Subsumption-Free Private-Pivot Learning}
\def\authorrunning{G. Kusper}

\setlist{nosep,leftmargin=*}

\newtheorem{definition}{Definition}[section]
\newtheorem{lemma}[definition]{Lemma}
\newtheorem{theorem}[definition]{Theorem}
\newtheorem{corollary}[definition]{Corollary}

\newtheorem{example}[definition]{Example}

\newcommand{\Source}{\mathrm{Source}}
\newcommand{\Sink}{\mathrm{Sink}}
\newcommand{\SN}{\mathit{SN}}
\newcommand{\RN}{\mathcal{N}}
\newcommand{\R}{\mathcal{R}}
\newcommand{\DNF}{\mathrm{DNF}}
\newcommand{\W}{\mathcal{W}}
\newcommand{\T}{\mathcal{T}}
\newcommand{\new}{\mathrm{new}}

\begin{document}
\maketitle

\begin{abstract}
A resolvable network is a directed-graph representation of SAT: every SAT instance can be translated into an RN, and every RN has an associated CNF formula. Each reach represents one clause. In a mixed reach, the head and tail are disjoint sets of variables containing the variables occurring negatively and positively in the clause, respectively; the distinguished symbols Source and Sink represent a missing negative- or positive-literal side. RN-Solver is a proof-of-concept SAT solver based on this representation. Its all-positive clauses are represented by white reaches, and its token distributions are the inclusion-minimal hitting sets of the current white tails, generated by monotone CNF--DNF dualization.
RN-Solver learns new white reaches by private-pivot resolution, a structured resolution sequence that uses old white reaches as pivot witnesses. In the original algorithm, every candidate white reach generated by such a chain was followed by a global subsumption test against the current network. Profiling showed that this subsumption test can dominate the runtime on random 3-SAT instances. We show that this check is unnecessary when the mixed reach used for learning is falsified by the current token distribution, meaning that the distribution makes all variables in the head true and all variables in the tail false.
The key invariant is simple: the resulting white tail is disjoint from the triggering token distribution, while the same distribution intersects every old white tail. Hence no old white reach can subsume the generated reach. This structural observation allows us to construct a simpler, snapshot-based, subsumption-free variant of RN-Solver, where each main-loop iteration uses a fixed set of old reaches and installs newly generated white reaches only at the end of the iteration. We prove soundness of the revised algorithm. Empirical evaluation confirms the elimination of the targeted checks: within an 8 second budget, snapshot/full-DNF solves 963 of 1,000 uf20-91 instances, compared with 724 for the original control flow. A separate 100-instance comparison identifies exact incremental DNF as the most effective of the three tested configurations.
\end{abstract}

\section{Introduction}

Resolvable networks (RNs) give a directed-graph view of propositional CNF formulas. A mixed reach is an ordered pair \((A,B)\) of disjoint non-empty sets of variables, called its head and tail. It represents the clause
\[
   \bigvee_{a\in A}\neg a \;\vee\; \bigvee_{b\in B} b,
\]
or equivalently the implication saying that if all variables in \(A\) are true, then at least one variable in \(B\) must be true. Thus the head contains the variables occurring negatively in the clause, and the tail contains the variables occurring positively. The distinguished symbols \(\Source\) and \(\Sink\) are used when the negative- or positive-literal side is absent, respectively. Every CNF clause can be represented as a reach, and every RN has an associated CNF formula; RNs are therefore not a new logic, but a structured graph representation of SAT.

The motivation for studying RN-based SAT solving is not to replace mature CDCL solvers on raw performance at its present proof-of-concept stage. Our question is whether structural properties can simplify reasoning within RN-Solver, not whether its implementation is already competitive with highly optimized SAT solvers. RN-Solver explores a different structural organization of SAT reasoning, combining a clause-level graph representation, monotone CNF--DNF dualization, and restricted resolution. Pure resolution may generate too many resolvents, while full dualization may generate too many minimal transversals; RN-Solver studies how these mechanisms can guide one another. In particular, this organization can expose invariants that turn runtime checks into theorem-level guarantees. The present paper gives a concrete example: the minimal-hitting-set structure that supplies private pivots also certifies that the learned white reach cannot be subsumed by any old white reach. The interest is therefore conceptual and algorithmic, while substantial scope remains for further performance improvements.

The special reaches determine the terminology used by RN-Solver. A reach of the form \((\Source,B)\) represents an all-positive clause and is called a white reach; the set \(B\) is its white tail. A reach of the form \((A,\Sink)\) represents an all-negative clause and is called a black reach. A reach with both a non-empty head and a non-empty tail is mixed. White reaches are the source of the token distributions used by RN-Solver: a token distribution is a set of variables interpreted as true, and it must intersect every current white tail. In other words, the token distributions generated by RN-Solver are the inclusion-minimal hitting sets of the current white tails, computed by monotone CNF--DNF dualization.

The original RN-Solver introduced in~\cite{KusperNagy2026RNSolver} uses these token distributions as a global reasoning device, instead of branching on assignments in the DPLL/CDCL style. Once a token distribution has been generated, the solver checks how it interacts with the non-white reaches. Informally, a token distribution falsifies a mixed reach if it makes all variables in the head true and all variables in the tail false. In that case, the mixed reach exposes a conflict with the current distribution. RN-Solver responds by a private-pivot chain: a structured sequence of resolution steps that uses suitable old white reaches as pivot witnesses and derives a new white reach.

This paper refines this private-pivot learning step. In the original RN-Solver, every newly generated candidate \((\Source,B_{\new})\) is tested for novelty and non-subsumption before being added to the network. 
For white reaches, this is ordinary clause subsumption: 
an old white reach \((\Source,B_i)\) subsumes \((\Source,B_{\new})\) when \(B_i\subseteq B_{\new}\). 
The profiling reported in~\cite{KusperNagy2026RNSolver} suggests that this check can dominate the runtime on random 3-SAT instances. 
Our main observation is structural: when the mixed reach used for learning is falsified by the current token distribution, the private-pivot chain produces a new white tail outside that distribution. Since the same distribution intersects every old white tail, no old white reach can subsume the generated one. This structural observation allows us to construct a simpler, snapshot-based, subsumption-free variant of RN-Solver.
\begin{example}[The key idea in miniature]
Let the old white tails be \(B_1=\{1,2\}\) and \(B_2=\{3,4\}\), and let \(C=\{1,3\}\), which hits both tails. For the \(C\)-falsified mixed reach \((\{1,3\},\{5\})\), the two private witnesses are \(B_1\) and \(B_2\), and the private-pivot construction gives
\[
B_{\new}=(B_1\cup B_2\cup\{5\})\setminus\{1,3\}=\{2,4,5\}.
\]
Thus \(B_{\new}\cap C=\emptyset\), while each old white tail intersects \(C\). Hence neither old white tail can be contained in \(B_{\new}\), so the new white reach cannot be subsumed by an old one.
\end{example}

We use a frozen snapshot: the white, mixed, and black reaches used during one iteration of the solver's main loop are fixed at the beginning of that iteration.
The original algorithm mutates the network while processing token distributions generated from an earlier white family. 
In the revised solver, each main-loop iteration starts with a snapshot \(\RN_t\), computes \(\T_t=\DNF(\W_t)\), collects new white reaches in a batch \(\Delta_t\), and installs the batch only after the iteration. This makes clear which reaches are old and which are newly generated. It also separates logical soundness, strict progress, and possible redundancy inside the new batch.
This snapshot discipline has a deliberate trade-off: a reach learned early in an iteration cannot affect token distributions processed later in that same iteration, so it may increase within-iteration work and permit batch-local redundancy; in return, it provides the clean old/new boundary on which the old-white non-subsumption guarantee relies.

The resulting control flow is simpler. A token distribution may be terminal, in which case it gives a SAT witness; it may be black-blocked, in which case it is already forced into contradiction by a black reach; or it may be active, in which case a learning step is possible. In the original solver, active token distributions were further divided into resolved and non-resolved cases, leading to a white-fixpoint SAT branch. In the snapshot-based formulation of this paper, every active token distribution is productive: it yields a new white reach that is not subsumed by the old white reaches. Thus the resolved branch and the white-fixpoint SAT branch disappear from the main control flow.

The main contributions are: (i) the definition of \(C\)-falsified reaches, formalizing the falsification condition described above; (ii) the C-Falsified Private-Pivot Complement Lemma; (iii) the Subsumption-Free Private-Pivot Lemma; (iv) strict white refinement and token-distribution elimination; (v) a revised snapshot/full-DNF algorithm without the resolved and white-fixpoint branches; and (vi) empirical evaluation of the revised control flow, including validation on all 1,000 \texttt{uf20-91} instances and a runtime--memory comparison of three implemented DNF-control configurations.

\section{Related Work}

\paragraph{SAT solving: resolution, splitting, and CDCL.}
The two classical algorithmic lines for SAT already appear in the earliest literature. The Davis--Putnam procedure~\cite{DavisPutnam1960} is resolution-based: variables are eliminated by deriving resolvents. The later Davis--Logemann--Loveland procedure~\cite{DavisLogemannLoveland1962} replaces exhaustive elimination by splitting and backtracking. Modern CDCL solvers extend this search-based line with unit propagation, implication graphs, conflict analysis, clause learning, restarts, and inprocessing; early milestones include GRASP~\cite{MarquesSilvaSakallah1999}, Chaff~\cite{Moskewicz2001Chaff}, MiniSat~\cite{EenSorensson2003MiniSat}, and Glucose-style learnt-clause management~\cite{AudemardSimon2009Glucose}. The broader SAT background is covered in the \emph{Handbook of Satisfiability}~\cite{Biere2021Handbook}. RN-Solver belongs to the resolution-based line but uses a graph-guided, dualization-driven control structure rather than DPLL/CDCL-style branching.

\paragraph{Resolution and restricted resolution strategies.}
Resolution itself is a foundational inference rule for automated reasoning: Robinson's resolution principle~\cite{Robinson1965Resolution} provides a sound and refutation-complete calculus for clauses. Many restricted proof procedures were developed to control the search space. Linear and input-style resolution, including linear resolution with selection functions~\cite{KowalskiKuehner1971LinearResolution}, model elimination~\cite{Loveland1968ModelElimination}, and connection-graph procedures~\cite{Kowalski1975ConnectionGraphs}, constrain the shape or scheduling of resolution derivations. RN-Solver is related to this family only at the level of proof shape. A private-pivot chain resolves one selected mixed reach against a sequence of selected side clauses, but those side clauses are current old white reaches, not necessarily original input clauses, and the pivot sequence is supplied by the private-element property of a minimal hitting set. Thus RN-Solver is best viewed as a dualization-guided restricted resolution method, not as an instance of input resolution.

\paragraph{Resolution proof complexity.}
The Cook--Reckhow framework~\cite{CookReckhow1979ProofSystems} formalizes propositional proof systems. Classical results include Haken's exponential lower bound for pigeonhole formulas~\cite{Haken1985Resolution}, hard formula families of Chv\'atal and Szemer\'edi~\cite{ChvatalSzemeredi1988HardResolution}, and the size--width relation of Ben-Sasson and Wigderson~\cite{BenSassonWigderson2001}. RN-Solver remains subject to resolution-style worst-case limitations; its interest is how the RN representation and dualization layer guide restricted resolution, not how to evade those limitations.

\paragraph{Monotone dualization and hypergraph transversals.}
The second non-standard ingredient of RN-Solver is monotone CNF--DNF dualization. The token distributions used by RN-Solver are the inclusion-minimal hitting sets of the current white tails, equivalently the minimal transversals of a hypergraph. The complexity of monotone dualization was studied by Fredman and Khachiyan~\cite{FredmanKhachiyan1996}, and related formulations were investigated by Eiter and Gottlob~\cite{EiterGottlob1995} and Bioch and Ibaraki~\cite{BiochIbaraki1995}. Practical transversal enumeration algorithms include MMCS-style methods~\cite{MurakamiUno2014}, and computational comparisons are given by Gainer-Dewar and Vera-Licona~\cite{GainerDewarVeraLicona2017}; Berge's monograph remains a standard reference for hypergraphs~\cite{Berge1984}. Here transversals guide resolution and certify non-subsumption; we do not propose a new enumerator.

\paragraph{Graph-based and compiled representations.}
BDDs canonically represent Boolean functions for a fixed variable ordering~\cite{Bryant1986BDD}, whereas ZDDs target sparse set families~\cite{Minato1993ZDD}. Knowledge compilation studies languages such as DNNF that support tractable queries~\cite{DarwicheMarquis2002KC}. Unlike these function- or solution-space representations, an RN retains a clause-level correspondence and expresses resolution as an operation on reaches.

\paragraph{Resolvable networks at the intersection of resolution and dualization.}
Resolvable networks were introduced as a graph-based representation of SAT in which clauses correspond to directed reaches~\cite{KusperBiroNagy2021RN}. The later RN-Solver paper~\cite{KusperNagy2026RNSolver} used this representation to combine monotone CNF--DNF dualization with private-pivot resolution. In this framework, transversals identify token distributions, token distributions identify private pivots, and private pivots generate new white reaches.

\paragraph{Subsumption, redundancy, and saturation.}
Subsumption and redundancy elimination are standard tools for controlling generated clauses. Saturation-style theorem provers derive clauses by inference rules and use clause selection, simplification, and redundancy criteria to control growth; see the resolution-theorem-proving chapter of the \emph{Handbook of Automated Reasoning}~\cite{BachmairGanzinger2001Resolution} and the E prover system description~\cite{Schulz2013E}. In SAT solving, preprocessing and inprocessing techniques such as variable elimination and clause elimination are central engineering tools~\cite{EenBiere2005Preprocessing}; blocked-clause elimination is a prominent resolution-environment-based redundancy criterion~\cite{JarvisaloBiereHeule2010Blocked}. In contrast, our result rules out old-white subsumption for private-pivot candidates generated from falsified mixed reaches, allowing the corresponding runtime test to be removed.

\paragraph{Snapshot discipline.}
The snapshot discipline separates old and newly generated clauses, as in saturation procedures, but specifically fixes the white family used both to generate token distributions and to prove old-white non-subsumption. Batch-local redundancy remains a separate implementation issue.

\paragraph{Gap addressed by this paper.}
RN-Solver links redundancy control in resolution to minimal hitting sets. Our contribution is not a new transversal enumerator or subsumption data structure: when the selected mixed reach is falsified, the dualization witness itself certifies non-subsumption of the private-pivot candidate, replacing a runtime test with a structural guarantee.

\section{Definitions}
\label{sec:definitions}

We use only the RN notation needed in this paper; all standard definitions are as in~\cite{KusperNagy2026RNSolver}. Let \(V\) be a finite set of Boolean variables. A resolvable network over \(V\) is the directed graph
\[
\RN=(\SN\cup\{\Source,\Sink\},\R),
\]
where \(\SN\subseteq\mathcal{P}(V)\setminus\{\emptyset\}\) is a family of ordinary subnetworks, each being a non-empty set of variables, and \(\Source\) and \(\Sink\) are distinguished symbols, distinct from each other and not belonging to \(\SN\). Let
\[
   U=\bigcup\SN
\]
be the universe of variables occurring in the ordinary subnetworks. The reach set \(\R\) contains only the following four forms:
\begin{itemize}
\item a mixed reach \((A,B)\), where \(A,B\in\SN\) and \(A\cap B=\emptyset\);
\item a white reach \((\Source,B)\), where \(B\in\SN\);
\item a black reach \((A,\Sink)\), where \(A\in\SN\); or
\item the empty reach \((\Source,\Sink)\).
\end{itemize}
The first component of a reach is its head and the second is its tail. No other reach forms are allowed: \(\Source\) may occur only as the first component (head), and \(\Sink\) only as the second component (tail). The corresponding clauses are
\[
\begin{aligned}
(A,B) &\longmapsto \bigvee_{a\in A}\neg a \;\vee\; \bigvee_{b\in B} b,\\
(\Source,B) &\longmapsto \bigvee_{b\in B} b,\\
(A,\Sink) &\longmapsto \bigvee_{a\in A}\neg a,\\
(\Source,\Sink) &\longmapsto \bot.
\end{aligned}
\]
Thus \(\Source\) represents the absence of a negative-literal side, while \(\Sink\) represents the absence of a positive-literal side.

Let the current white reaches be \((\Source,B_1),\ldots,(\Source,B_k)\). Their token distributions are the minimal hitting sets
\[
\DNF(B_1,\ldots,B_k)=\{C\subseteq U\mid C\cap B_i\ne\emptyset\text{ for all }i,
\text{ and }C\text{ is inclusion-minimal}\}.
\]
We use the standard private-element property: if \(C\in\DNF(B_1,\ldots,B_k)\), then for every \(c\in C\) there is an index \(j(c)\) with \(B_{j(c)}\cap C=\{c\}\).

A token distribution \(C\) is black-blocked if some black reach \((G,\Sink)\) has \(G\subseteq C\). It is terminal if the valuation clause represented by \((C,U\setminus C)\) is not subsumed by any reach in the network. If neither case holds, \(C\) is active.

\begin{definition}[\(C\)-falsified reach]
Let \(C\subseteq U\), and let the associated valuation make exactly the variables in \(C\) true. A reach is \emph{\(C\)-falsified} if its corresponding clause is false under this valuation. Equivalently,
\begin{itemize}
\item a mixed reach \((G,H)\) is \(C\)-falsified iff \(G\subseteq C\) and \(H\cap C=\emptyset\);
\item a white reach \((\Source,H)\) is \(C\)-falsified iff \(H\cap C=\emptyset\);
\item a black reach \((G,\Sink)\) is \(C\)-falsified iff \(G\subseteq C\); and
\item the empty reach \((\Source,\Sink)\) is \(C\)-falsified for every \(C\).
\end{itemize}
\end{definition}

For a \(C\)-falsified mixed reach \((G,H)\), choose for each \(g\in G\) a private witness \(B_{j(g)}\) with \(B_{j(g)}\cap C=\{g\}\). The private-pivot white tail is
\begin{equation}
\label{eq:bnew}
    B_{\new}=\left(\left(\bigcup_{g\in G}B_{j(g)}\right)\cup H\right)\setminus G.
\end{equation}
An iteration snapshot, or snapshot, is a pair \((\RN_t,\W_t)\) fixed at the beginning of a main-loop iteration, after the initial unit-propagation and decision checks; in particular, \(\RN_t\) contains no empty reach \((\Source,\Sink)\). The DNF \(\T_t=\DNF(\W_t)\) and all old-white non-subsumption statements are interpreted with respect to this snapshot.

\section{Theoretical Results}
\label{sec:theory}

\begin{lemma}[No old white reach is \(C\)-falsified]
Let \(C\in\DNF(B_1,\ldots,B_k)\). Then no current white reach \((\Source,B_i)\) is \(C\)-falsified.
\end{lemma}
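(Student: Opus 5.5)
The argument is a direct unfolding of the definitions, so I will argue by contradiction using only the hitting-set property of \(C\) and the characterization of \(C\)-falsified white reaches from the definition of \(C\)-falsified reaches. Inclusion-minimality of \(C\) and the private-element property are not needed; they become relevant only in the later lemmas.

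Fix \(C\in\DNF(B_1,\ldots,B_k)\) and an arbitrary current white reach \((\Source,B_i)\), \(1\le i\le k\). By the definition of the token distributions, every member of \(\DNF(B_1,\ldots,B_k)\) is a hitting set of the current white tails, so \(C\cap B_i\neq\emptyset\). By the definition of \(C\)-falsified reaches, \((\Source,B_i)\) is \(C\)-falsified if and only if \(B_i\cap C=\emptyset\). These two facts contradict each other, so \((\Source,B_i)\) is not \(C\)-falsified.

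As a cross-check at the level of valuations, I will also note the semantic reading. Under the valuation making exactly the variables of \(C\) true, the clause \(\bigvee_{b\in B_i} b\) contains a true literal, namely any \(c\in C\cap B_i\). It is therefore satisfied, which again means the reach is not \(C\)-falsified.

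The only point needing care is bookkeeping, and there is no real mathematical obstacle. The white reaches \(B_1,\ldots,B_k\) must be exactly the family from which \(C\) was generated, which in the snapshot setting is \(\W_t\). The statement is about this old family; a white reach learned later in the iteration, such as one with tail \(B_{\new}\), may well be \(C\)-falsified. That possibility is exactly what the subsequent non-subsumption lemma exploits.
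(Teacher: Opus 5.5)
Your proof is correct and matches the paper's argument. Both use only the hitting-set property \(C\cap B_i\neq\emptyset\) together with the characterization that \((\Source,B_i)\) is \(C\)-falsified exactly when \(B_i\cap C=\emptyset\).
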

\begin{proof}
Every \(B_i\) intersects \(C\), whereas a white reach \((\Source,B_i)\) is \(C\)-falsified exactly when \(B_i\cap C=\emptyset\).
\end{proof}

\begin{lemma}[Mixed witness]
Let \(C\in\DNF(\W_t)\). If \(C\) is neither terminal nor black-blocked in the snapshot \(\RN_t\), then \(\RN_t\) contains a \(C\)-falsified mixed reach.
\end{lemma}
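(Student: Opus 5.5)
The proof unpacks the definition of \emph{terminal}. Since \(C\) is not terminal, the valuation clause represented by \((C,U\setminus C)\) is subsumed by some reach \(r\in\R\) of \(\RN_t\). This valuation clause is \(\bigvee_{c\in C}\neg c\vee\bigvee_{u\in U\setminus C}u\); it is falsified exactly by the valuation making \(C\) true and \(U\setminus C\) false. Subsumption means the literal set of \(r\) is contained in that of the valuation clause. Any reach whose literals all lie in this clause is false under the valuation associated with \(C\), so \(r\) is \(C\)-falsified. Concretely, a reach with head \(G\) and tail \(H\) subsumes \((C,U\setminus C)\) iff \(G\subseteq C\) and \(H\subseteq U\setminus C\). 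Since \(H\subseteq U\), the second condition is equivalent to \(H\cap C=\emptyset\), which is exactly the falsification condition in the definition of \(C\)-falsified reach. So the first step establishes that there is a \(C\)-falsified reach \(r\) in \(\RN_t\).

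The second step is a case split on the four admissible reach forms of \(r\), ruling out all but the mixed one.
\begin{itemize}
\item \(r\) cannot be the empty reach \((\Source,\Sink)\), because the snapshot \(\RN_t\) contains no empty reach by definition.
\item \(r\) cannot be a white reach \((\Source,B_i)\). The lemma that no old white reach is \(C\)-falsified applies here, since \(C\in\DNF(\W_t)\) and \(\W_t\) is exactly the white family of the snapshot.
\item \(r\) cannot be a black reach \((G,\Sink)\). Its \(C\)-falsification means \(G\subseteq C\), which would make \(C\) black-blocked, contrary to hypothesis.
\end{itemize}
Hence \(r\) is a mixed reach \((G,H)\) with \(G\subseteq C\) and \(H\cap C=\emptyset\), as claimed.

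The main obstacle is the first step, namely pinning down what ``subsumed by a reach'' means for the special symbols \(\Source\) and \(\Sink\). I would state explicitly that subsumption is clause subsumption of the associated clauses, with \(\Source\) and \(\Sink\) contributing no literals. With that convention, subsumption of the valuation clause is literally the \(C\)-falsification condition. I would also check that the white, black and empty reaches of \(\RN_t\) are exactly those of the snapshot, so that the earlier lemma and the black-blocked hypothesis refer to the same reach set.
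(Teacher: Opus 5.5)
Your proof is correct and follows the paper's argument: non-terminality gives a reach subsuming the valuation clause \((C,U\setminus C)\), hence a \(C\)-falsified reach, and the empty, white and black cases are excluded by the snapshot convention, the no-old-white-falsified lemma, and the non-black-blocked hypothesis respectively. Your explicit check that subsumption of this clause amounts to \(G\subseteq C\) and \(H\cap C=\emptyset\), with \(\Source\) and \(\Sink\) contributing no literals, simply spells out what the paper calls ``the reach semantics.''
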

\begin{proof}
Since \(C\) is not terminal, the valuation clause represented by \((C,U\setminus C)\) is subsumed by some reach of \(\RN_t\); by the reach semantics, that reach is \(C\)-falsified. It is not black, because \(C\) is not black-blocked, and it is not white by the previous lemma. It is not the empty reach by the snapshot convention. Hence it is mixed.
\end{proof}

\begin{lemma}[C-Falsified Private-Pivot Complement]
Let \(C\in\DNF(B_1,\ldots,B_k)\), and let \((G,H)\) be a \(C\)-falsified mixed reach. Let \(B_{\new}\) be defined by~\eqref{eq:bnew}. Then
\[
    B_{\new}\subseteq U\setminus C.
\]
\end{lemma}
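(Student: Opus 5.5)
We show that \(B_{\new}\) is contained in \(U\) and misses \(C\), by chasing an arbitrary element \(x\in B_{\new}\) through the definition~\eqref{eq:bnew}. We may assume \(C\subseteq U\) (since \(C\in\DNF(B_1,\ldots,B_k)\) is a subset of \(U\)), and every tail \(B_{j(g)}\) and \(H\) lies in \(\SN\), hence in \(U=\bigcup\SN\). Therefore \(B_{\new}\subseteq U\) holds immediately. The substantive part is \(B_{\new}\cap C=\emptyset\).

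Let \(x\in B_{\new}\). By~\eqref{eq:bnew}, \(x\notin G\), and either \(x\in H\) or \(x\in B_{j(g)}\) for some \(g\in G\). We split into these two cases.

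\emph{Case \(x\in H\).} Since \((G,H)\) is \(C\)-falsified, the definition of \(C\)-falsified mixed reach gives \(H\cap C=\emptyset\), so \(x\notin C\).

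\emph{Case \(x\in B_{j(g)}\) for some \(g\in G\).} By the choice of the private witness, \(B_{j(g)}\cap C=\{g\}\). Such a witness exists because \(C\)-falsification gives \(G\subseteq C\), so the private-element property applies to each \(g\in G\). Suppose, for contradiction, that \(x\in C\). Then \(x\in B_{j(g)}\cap C=\{g\}\), so \(x=g\in G\). This contradicts \(x\notin G\). Hence \(x\notin C\).

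In both cases \(x\in U\setminus C\), which proves \(B_{\new}\subseteq U\setminus C\). The only point requiring care is the second case. There the removal of \(G\) in~\eqref{eq:bnew} is exactly what excludes the unique \(C\)-element \(g\) of each witness. The hypothesis \(G\subseteq C\) is what guarantees that the witnesses exist in the first place. No other obstacle is expected.
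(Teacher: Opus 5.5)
Your proof is correct and follows essentially the same element-chasing argument as the paper. Elements of \(H\) miss \(C\) by falsification, and an element of a witness \(B_{j(g)}\) that survives the removal of \(G\) cannot lie in \(C\) because \(B_{j(g)}\cap C=\{g\}\); you additionally make explicit the containment \(B_{\new}\subseteq U\) and the existence of the witnesses from \(G\subseteq C\), which the paper leaves implicit.
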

\begin{proof}
If \(x\in H\), then \(x\notin C\) by \(C\)-falsification. If \(x\in B_{j(g)}\setminus\{g\}\), then \(x\notin C\) because \(B_{j(g)}\cap C=\{g\}\). Since all pivots \(g\in G\) are removed in~\eqref{eq:bnew}, every element of \(B_{\new}\) lies outside \(C\).
\end{proof}

\begin{theorem}[Subsumption-Free Private-Pivot]
Under the assumptions of the previous lemma, no current old white reach \((\Source,B_i)\) subsumes \((\Source,B_{\new})\).
\end{theorem}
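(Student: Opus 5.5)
The plan is to argue by contradiction, using the two facts already isolated: every old white tail meets \(C\), and the C-Falsified Private-Pivot Complement Lemma places \(B_{\new}\) entirely outside \(C\). For white reaches, subsumption of \((\Source,B_{\new})\) by \((\Source,B_i)\) means exactly \(B_i\subseteq B_{\new}\). So I will suppose some old white reach \((\Source,B_i)\) with \(B_i\subseteq B_{\new}\) exists and derive a contradiction.

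First, since \(C\in\DNF(B_1,\ldots,B_k)\), the hitting-set property gives \(B_i\cap C\neq\emptyset\); this is the content of the lemma stating that no old white reach is \(C\)-falsified. Pick \(c\in B_i\cap C\). Second, by the Complement Lemma, \(B_{\new}\subseteq U\setminus C\), so \(B_{\new}\cap C=\emptyset\). Combining with the assumed inclusion \(B_i\subseteq B_{\new}\) yields \(c\in B_{\new}\cap C=\emptyset\), which is impossible. Hence no old white tail is contained in \(B_{\new}\), and no old white reach subsumes \((\Source,B_{\new})\).

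An equivalent semantic formulation, which I may add as a remark, goes through the valuation making exactly \(C\) true. Under this valuation \((\Source,B_{\new})\) is \(C\)-falsified by the Complement Lemma, while \((\Source,B_i)\) is not. A subsuming clause is falsified whenever the subsumed clause is, so subsumption would force \((\Source,B_i)\) to be \(C\)-falsified, which is again a contradiction.

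There is no real obstacle here. The only point needing care is bookkeeping: ``old'' must refer to the white family \(B_1,\ldots,B_k\) from which \(C\) was computed, namely the snapshot \(\W_t\). Otherwise the hitting property could fail for reaches added later. I will state explicitly that the quantification ranges over this family, consistent with the snapshot convention.
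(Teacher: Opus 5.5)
Your proof is correct and is essentially the paper's argument: every old white tail \(B_i\) meets \(C\) because \(C\) is a hitting set, while the Complement Lemma gives \(B_{\new}\cap C=\emptyset\), so \(B_i\subseteq B_{\new}\) is impossible. Your semantic reformulation and your remark that ``old'' must range over the snapshot family from which \(C\) was computed are both valid supplements rather than a different route.
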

\begin{proof}
Every old white tail \(B_i\) intersects \(C\), while \(B_{\new}\cap C=\emptyset\). Thus \(B_i\subseteq B_{\new}\) is impossible.
\end{proof}

\begin{example}[A subsumption-free private-pivot step]
Let the current white tails be
\[
    B_1=\{1,2\},\qquad B_2=\{3,4\},\qquad B_3=\{5,6\}.
\]
One token distribution is
\[
    C=\{1,3,5\}\in\DNF(B_1,B_2,B_3).
\]
Consider the mixed reach
\[
    (G,H)=(\{1,3\},\{7,8\}).
\]
It is falsified by \(C\), since \(G\subseteq C\) and \(H\cap C=\emptyset\).
The private witnesses for the pivots are \(B_1\) and \(B_2\), because
\[
    B_1\cap C=\{1\},\qquad B_2\cap C=\{3\}.
\]
The private-pivot construction gives
\[
    B_{\new}
    =
    (B_1\cup B_2\cup H)\setminus G
    =
    (\{1,2\}\cup\{3,4\}\cup\{7,8\})\setminus\{1,3\}
    =
    \{2,4,7,8\}.
\]
Thus \(B_{\new}\cap C=\emptyset\). On the other hand, every old white tail intersects \(C\):
\[
    B_1\cap C=\{1\},\qquad
    B_2\cap C=\{3\},\qquad
    B_3\cap C=\{5\}.
\]
Therefore no old white tail can be contained in \(B_{\new}\), and the generated white reach \((\Source,B_{\new})\) cannot be subsumed by any old white reach.
\end{example}

\begin{example}[Why the falsification condition is needed]
The condition \(H\cap C=\emptyset\) cannot be dropped. Let
\[
    B_1=\{1,2\},\qquad B_2=\{3,4\},
\]
and take the token distribution
\[
    C=\{1,3\}.
\]
Now consider the mixed reach
\[
    (G,H)=(\{1\},\{3,4\}).
\]
Here \(G\subseteq C\), but the reach is not falsified by \(C\), because
\[
    H\cap C=\{3\}\neq\emptyset.
\]
If we nevertheless perform the private-pivot construction using the witness \(B_1\), we obtain
\[
    B_{\new}
    =
    (B_1\cup H)\setminus\{1\}
    =
    (\{1,2\}\cup\{3,4\})\setminus\{1\}
    =
    \{2,3,4\}.
\]
But now the old white tail \(B_2=\{3,4\}\) satisfies
\[
    B_2\subseteq B_{\new}.
\]
Thus the generated white reach would be subsumed by an old white reach. This shows why the revised solver learns subsumption-free candidates only from mixed reaches falsified by the current token distribution.
\end{example}

\begin{lemma}[Private-pivot soundness without subsumption]
Every white reach obtained by a private-pivot chain from reaches of \(\RN_t\) is a logical consequence of \(\RN_t\). Adding it preserves satisfiability, regardless of whether it is subsumed.
\end{lemma}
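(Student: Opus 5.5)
The plan is to show that every step of a private-pivot chain is an ordinary propositional resolution step, and then invoke the soundness of resolution. Fix the mixed reach \((G,H)\) of \(\RN_t\) and the witnesses \(B_{j(g)}\), \(g\in G\), all white reaches of the snapshot. Enumerate \(G=\{g_1,\ldots,g_m\}\) and define intermediate clauses by induction. Start with \(D_0=\bigvee_{a\in G}\neg a\vee\bigvee_{b\in H}b\), the clause of \((G,H)\). Let \(D_\ell\) be the resolvent of \(D_{\ell-1}\) with the white clause \(\bigvee_{b\in B_{j(g_\ell)}}b\) on the pivot \(g_\ell\).

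The first step is an invariant, proved by induction on \(\ell\). The negative literals of \(D_\ell\) are exactly \(\neg g_{\ell+1},\ldots,\neg g_m\). The positive part of \(D_\ell\) is \(\bigl(H\cup\bigcup_{i\le\ell}B_{j(g_i)}\bigr)\setminus\{g_1,\ldots,g_\ell\}\). Since white clauses have no negative literals, no new negative literals ever appear. Hence \(D_m\) is all-positive, and its positive part coincides with \(B_{\new}\) of~\eqref{eq:bnew}.

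Two well-formedness details must be checked at each step. First, \(g_\ell\) actually occurs negatively in \(D_{\ell-1}\) and positively in \(B_{j(g_\ell)}\), the latter because \(B_{j(g_\ell)}\cap C=\{g_\ell\}\). Second, no complementary pair other than the pivot arises, since the side clause is all-positive. A positive occurrence of some later pivot \(g_{\ell'}\) (with \(\ell'>\ell\)) inside \(B_{j(g_\ell)}\) would yield a tautological intermediate clause. This cannot happen when the witnesses are private for a \(C\) containing \(G\), because \(B_{j(g_\ell)}\cap C=\{g_\ell\}\). In general I would remark that the semantic argument below does not need it anyway.

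The second step is soundness of each resolution step. Take any valuation satisfying both premises \(D_{\ell-1}\) and the white clause. If it makes \(g_\ell\) true, it satisfies \(D_{\ell-1}\) through a literal other than \(\neg g_\ell\), and that literal is kept in the resolvent. If it makes \(g_\ell\) false, the white clause is satisfied by some other element of its tail, and that element is kept. So each \(D_\ell\) is entailed by \(\RN_t\), and by induction \(D_m=(\Source,B_{\new})\) is a logical consequence of \(\RN_t\).

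Satisfiability preservation then follows directly. Every model of \(\RN_t\) satisfies \((\Source,B_{\new})\), so \(\RN_t\) together with the new white reach has exactly the same models. Nothing in this argument refers to subsumption, which gives the final clause of the statement.

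The only potential obstacle is the bookkeeping in the invariant. In particular, removing a pivot \(g_\ell\) that also appears positively in a later witness must match the set difference \(\setminus G\) in~\eqref{eq:bnew}. I would handle this by showing that the final positive set equals the union minus all of \(G\); the privacy of the witnesses makes each pivot occur positively only in its own witness.
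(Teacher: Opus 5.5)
Your proposal is correct and takes the same route as the paper, whose proof simply observes that a private-pivot chain is a finite sequence of reach-resolutions, that is, ordinary clause resolutions, and that resolution is sound. Your explicit invariant and your use of privacy to identify the last resolvent with \(B_{\new}\) spell out what the paper leaves implicit. That identification, and not only the avoidance of tautologies, is where privacy is genuinely needed, so your remark that the semantic argument ``does not need it anyway'' holds only for the entailment of the intermediate clauses \(D_\ell\).
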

\begin{proof}
A private-pivot chain is a finite sequence of reach-resolutions, and reach-resolution is ordinary clause resolution in RN notation. Resolution is sound.
\end{proof}

\begin{theorem}[Strict White Refinement]
Let \(F_{\W_t}=\bigwedge_i(\bigvee_{b\in B_i}b)\) be the old white abstraction. If \(B_{\new}\) is generated from a \(C\)-falsified mixed reach in the snapshot \(\RN_t\), then
\[
    \RN_t^{\wedge}\models \bigvee_{b\in B_{\new}} b,
    \qquad\text{but}\qquad
    F_{\W_t}\not\models \bigvee_{b\in B_{\new}} b.
\]
\end{theorem}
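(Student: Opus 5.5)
The statement has two halves, and I will prove them separately.

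\emph{First half: entailment.} I will exhibit $B_{\new}$ as the end of an explicit resolution chain from reaches of $\RN_t$ and then apply the soundness lemma for private-pivot chains. Enumerate $G=\{g_1,\dots,g_m\}$, with private witnesses $B_{j(g_1)},\dots,B_{j(g_m)}$. Start from the clause of $(G,H)$ and resolve successively on $g_1,\dots,g_m$ against the white clauses $\bigvee_{b\in B_{j(g_\ell)}}b$.

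To see that this chain is well defined, consider step $\ell$. The current clause contains $\neg g_\ell$, since the negative side still consists of the unresolved pivots $g_\ell,\dots,g_m$. The witness contains $g_\ell$ positively.

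A tautology cannot arise. Every positive literal introduced lies in $B_{j(g)}\setminus\{g\}$ or in $H$, and both are disjoint from $C$. The remaining negative literals are pivots in $G\subseteq C$.

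After $m$ steps no negative literal remains. The positive side is then exactly $\bigl(\bigcup_g B_{j(g)}\cup H\bigr)\setminus G$, which is $B_{\new}$. I will justify this by the C-Falsified Private-Pivot Complement lemma: no element of $G$ can re-enter the positive side. Each intermediate clause is an ordinary resolvent, so soundness of resolution (the Private-pivot soundness lemma) gives $\RN_t^{\wedge}\models\bigvee_{b\in B_{\new}}b$.

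\emph{Second half: non-entailment.} Here the plan is to produce a countermodel, namely the valuation $v_C$ that makes exactly $C$ true.
\begin{itemize}
\item Since $C\in\DNF(\W_t)$, $C$ meets every old white tail $B_i$. Hence $v_C$ satisfies every clause of $F_{\W_t}$; equivalently, by the lemma that no old white reach is $C$-falsified.
\item By the Complement lemma, $B_{\new}\subseteq U\setminus C$, so every variable of $B_{\new}$ is false under $v_C$. Thus $v_C$ falsifies $\bigvee_{b\in B_{\new}}b$.
\end{itemize}
Therefore $F_{\W_t}\not\models\bigvee_{b\in B_{\new}}b$. This is really the semantic form of the Subsumption-Free Private-Pivot theorem: not only does no single $B_i$ sit inside $B_{\new}$, but the conjunction of all of them fails to entail it.

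\emph{Expected obstacle.} The main care point is the bookkeeping in the chain. I must confirm that each resolution step is legal and that the result is exactly $B_{\new}$, with no leftover negative literals and no surviving pivots. The privacy condition $B_{j(g)}\cap C=\{g\}$ together with $G\subseteq C$ handles this. A minor side remark: if $B_{\new}$ were empty, the derived reach would be $(\Source,\Sink)$. The countermodel argument still applies, because $v_C$ satisfies $F_{\W_t}$ while the empty clause is false. So the strict-refinement claim holds uniformly.
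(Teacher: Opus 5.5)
Your proposal is correct and follows the paper's proof. The paper obtains the entailment from soundness of the private-pivot chain and the non-entailment from the countermodel $v_C$, which satisfies every old white clause because $C$ hits each $B_i$ and falsifies the new clause because $B_{\new}\cap C=\emptyset$ by the Complement lemma; your step-by-step check that the chain is well defined and ends exactly at $B_{\new}$ is detail the paper leaves implicit (no pivot re-enters because $B_{j(g)}\cap G\subseteq B_{j(g)}\cap C=\{g\}$ and $H\cap G=\emptyset$, not because of the Complement lemma's statement), and your empty-clause remark is vacuous since $\emptyset\neq H\subseteq B_{\new}$.
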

\begin{proof}
The first statement is private-pivot soundness. For the second, the valuation \(V_C\) satisfies all old white clauses because \(C\) hits every \(B_i\). By the complement lemma, \(B_{\new}\cap C=\emptyset\), so \(V_C\) falsifies the new white clause.
\end{proof}

\begin{corollary}[Token-distribution elimination]
After adding \((\Source,B_{\new})\), the generating token distribution \(C\) is not in \(\DNF(\W_t\cup\{B_{\new}\})\).
\end{corollary}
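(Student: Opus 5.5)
The plan is to show directly that \(C\) fails the hitting condition for the enlarged family \(\W_t\cup\{B_{\new}\}\). Membership in \(\DNF(\cdot)\) requires two things: the set must intersect every tail in the family, and it must be inclusion-minimal with that property. It therefore suffices to show that the first requirement fails for the single new tail \(B_{\new}\). Minimality need not be examined at all.

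The key step is the C-Falsified Private-Pivot Complement Lemma. Its hypotheses hold here: \(C\in\DNF(\W_t)\) and the generating mixed reach \((G,H)\) is \(C\)-falsified, so the lemma gives \(B_{\new}\subseteq U\setminus C\). Hence \(B_{\new}\cap C=\emptyset\), and \(C\) is not a hitting set of \(\W_t\cup\{B_{\new}\}\), so \(C\notin\DNF(\W_t\cup\{B_{\new}\})\). Equivalently, in the language of the Strict White Refinement theorem, the valuation \(V_C\) falsifies the new white clause. Every token distribution of the new family must satisfy all white clauses, so \(C\) cannot be one of them.

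I expect essentially no obstacle; the corollary is a one-line consequence of the complement lemma. The only point I would state explicitly is that \(B_{\new}\) is non-empty, so that \((\Source,B_{\new})\) is a legitimate white reach and not the empty reach \((\Source,\Sink)\). If \(B_{\new}\) were empty, no set could hit it, so \(\DNF\) of the new family would be empty and the conclusion would hold even more trivially. Either way, \(C\) is eliminated.
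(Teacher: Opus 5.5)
Your proposal is correct and matches the paper's proof, which is the one-line observation that \(C\) does not hit \(B_{\new}\) (by the complement lemma), so \(C\) fails the hitting condition for \(\W_t\cup\{B_{\new}\}\). Your empty-tail case never arises: a mixed reach has \(H\neq\emptyset\) and \(H\cap G=\emptyset\), so \(H\subseteq B_{\new}\) and \(B_{\new}\) is always non-empty.
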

\begin{proof}
The set \(C\) does not hit \(B_{\new}\).
\end{proof}
\begin{example}[Eliminating the generating distribution]
In the introductory example, adding the new white tail \(B_{\new}=\{2,4,5\}\) eliminates \(C=\{1,3\}\) immediately: \(C\) still hits the two old white tails, but it does not hit \(B_{\new}\). Thus \(C\) cannot occur in the next DNF of the white family.
\end{example}

\begin{theorem}[Snapshot active distributions are productive]
In an iteration snapshot, every active \(C\in\DNF(\W_t)\) yields a \(C\)-falsified mixed reach and a private-pivot white reach not subsumed by old white reaches. Hence there is no active-but-resolved token distribution in the snapshot sense.
\end{theorem}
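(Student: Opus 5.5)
The plan is to chain the lemmas already proved in this section, working entirely inside the fixed snapshot \(\RN_t\) with white family \(\W_t=\{B_1,\ldots,B_k\}\).

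\textbf{Step 1: obtain the mixed reach.} Let \(C\in\DNF(\W_t)\) be active. By definition, \(C\) is neither terminal nor black-blocked. The Mixed Witness Lemma therefore gives a \(C\)-falsified mixed reach \((G,H)\in\R\) of \(\RN_t\). We fix one such reach; any choice will do.

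\textbf{Step 2: build the private-pivot tail.} Since \((G,H)\) is mixed and \(C\)-falsified, we have \(G\neq\emptyset\) and \(G\subseteq C\). For each \(g\in G\), the private-element property of the minimal hitting set \(C\) supplies an index \(j(g)\) with \(B_{j(g)}\cap C=\{g\}\). These old white reaches serve as pivot witnesses, and we define \(B_{\new}\) by~\eqref{eq:bnew}.

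One point needs a brief check: the private-pivot chain must be a legitimate sequence of reach-resolutions producing \((\Source,B_{\new})\).
\begin{itemize}
\item Resolving \((G,H)\) successively on each \(g\in G\) against \((\Source,B_{j(g)})\) removes \(g\) from the head.
\item Each \(g\) occurs positively in \(B_{j(g)}\) and negatively in the current head.
\item The pivots are distinct elements of \(C\), and each \(B_{j(g)}\) meets \(C\) only in \(g\).
\item Hence no witness reintroduces another pivot \(g'\in G\subseteq C\) into the tail, and no clashing pair other than the chosen pivot arises.
\end{itemize}
After all of \(G\) is consumed, the head is empty, i.e.\ \(\Source\), and the tail is exactly \(B_{\new}\). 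I expect this to be the only step needing care, namely making sure the chain ends in a white reach and not a tautology. It is handled by the same disjointness fact used in the complement lemma.

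\textbf{Step 3: show \(B_{\new}\) is non-empty.} The C-Falsified Private-Pivot Complement Lemma gives \(B_{\new}\subseteq U\setminus C\). This set must be non-empty, since otherwise the chain would derive the empty reach. For non-emptiness, suppose \(B_{\new}=\emptyset\). Then \(H=\emptyset\), which is impossible for a mixed reach since \(H\in\SN\) is non-empty.

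\textbf{Step 4: conclude.} The Subsumption-Free Private-Pivot Theorem now says that no old white reach \((\Source,B_i)\) with \(B_i\in\W_t\) subsumes \((\Source,B_{\new})\). Thus every active \(C\) is productive: the case ``active but every generated candidate is subsumed or already present'' cannot occur. Novelty also follows, since \(B_{\new}\) equal to some \(B_i\) would be a special case of subsumption. Optionally, Strict White Refinement and the Token-distribution Elimination Corollary can be cited to stress that the learned reach genuinely removes \(C\) from the next DNF. This confirms there is no active-but-resolved distribution in the snapshot sense.
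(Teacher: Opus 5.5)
Your proof is correct and follows the paper's route: the Mixed Witness Lemma supplies a \(C\)-falsified mixed reach, and the Subsumption-Free Private-Pivot Theorem gives non-subsumption by old white reaches. Your extra checks fill in details the paper leaves implicit: that the chain is a clash-free sequence of resolutions ending in a white reach with tail exactly \(B_{\new}\), and that \(B_{\new}\supseteq H\neq\emptyset\).
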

\begin{proof}
By the mixed-witness lemma, an active \(C\) has a \(C\)-falsified mixed witness. The Subsumption-Free Private-Pivot Theorem gives a resolvent that is not subsumed by old white reaches.
\end{proof}

\begin{theorem}[Snapshot decide-or-refine]
Let \(\T_t=\DNF(\W_t)\). One snapshot-based iteration of the revised solver satisfies exactly one of the following outcomes:
some \(C\in\T_t\) is terminal and gives SAT; every \(C\in\T_t\) is black-blocked and gives UNSAT; or at least one new old-white-non-subsumed white reach can be generated.
\end{theorem}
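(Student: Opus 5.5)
The plan is to split the finite family \(\T_t=\DNF(\W_t)\) into three cases by the classification of token distributions given in Section~\ref{sec:definitions}: terminal, black-blocked, or active.

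Since "terminal" and "active" are defined relative to a fixed network, everything is evaluated in the frozen snapshot \(\RN_t\). Within one iteration every \(C\) therefore has a well-defined status.

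\textbf{Making "exactly one" precise.} The three outcomes overlap as raw predicates: a terminal \(C\) may coexist with an active \(C'\). So I read the statement as describing the solver's decision order. If some \(C\in\T_t\) is terminal, the iteration reports SAT. Otherwise, if every \(C\) is black-blocked, it reports UNSAT. Otherwise it refines. Under this reading the outcomes are mutually exclusive by construction, and the work lies in showing they are exhaustive and that each verdict is justified.

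\textbf{Case 1 (SAT).} Suppose some \(C\in\T_t\) is terminal, so no reach of \(\RN_t\) subsumes the valuation clause \((C,U\setminus C)\). By the reach semantics used in the mixed-witness lemma, a reach subsumes this valuation clause exactly when it is \(C\)-falsified. Hence no reach of \(\RN_t\) is \(C\)-falsified, and the valuation \(V_C\) satisfies \(\RN_t^{\wedge}\). Every reach in \(\RN_t\) is either original or learned by resolution. Learned reaches are consequences of the originals, and the network only grows, so all original reaches remain in \(\RN_t\) and \(V_C\) satisfies the input formula as well.

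\textbf{Case 2 (UNSAT).} Suppose no \(C\) is terminal and every \(C\in\T_t\) is black-blocked. I argue by contradiction. Let \(V\) be a model of \(\RN_t^{\wedge}\), and let \(S\) be its set of true variables. Then \(S\) hits every white tail, so \(S\) contains some minimal hitting set \(C\in\T_t\). That \(C\) is black-blocked by some \((G,\Sink)\) with \(G\subseteq C\subseteq S\), so \(V\) falsifies this black reach, a contradiction.

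Two edge cases need separate treatment.
\begin{itemize}
\item If \(\W_t\) is empty, then \(\T_t=\{\emptyset\}\). I rely on the earlier unit-propagation checks, or I note that the argument above still works with \(C=\emptyset\).
\item If \(\T_t=\emptyset\), some white tail would have to be empty. This is excluded because subnetworks are non-empty.
\end{itemize}
By the soundness lemma, unsatisfiability of \(\RN_t\) transfers to the input formula.

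\textbf{Case 3 (refine).} Otherwise some \(C\in\T_t\) is neither terminal nor black-blocked, that is, \(C\) is active. The theorem that snapshot active distributions are productive then yields a \(C\)-falsified mixed reach, via the mixed-witness lemma. The Subsumption-Free Private-Pivot Theorem then yields a white reach \((\Source,B_{\new})\) that no old white reach subsumes. By the Token-distribution elimination corollary it is also genuinely new, since \(C\) does not hit \(B_{\new}\).

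\textbf{Main obstacle.} The cases themselves are immediate from earlier results. I expect the delicate part to be the "exactly one" claim together with the justification of the SAT and UNSAT verdicts for the original formula, as opposed to the snapshot. The approach is to state the priority order explicitly and invoke the soundness lemma to transfer between \(\RN_t\) and the input.
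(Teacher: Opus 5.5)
Your Case~3 is exactly the paper's proof. If neither decision case holds, some \(C\in\T_t\) is active, and the productivity theorem (mixed witness followed by the Subsumption-Free Private-Pivot Theorem) supplies the refinement. The paper's proof stops there. It tacitly reads the three outcomes in the solver's decision order, as you do explicitly, and it does not justify the SAT and UNSAT verdicts. Your remark that the raw predicates overlap is correct. Outcomes two and three can overlap as well, since a black-blocked \(C\) may still admit a \(C\)-falsified mixed reach. Your UNSAT argument is also correct: the true set of any model hits every white tail, so it contains a minimal hitting set, and that set is black-blocked.

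One supporting claim in Case~1 is false: the network does not only grow. Each snapshot is frozen only after Step~1 has exhaustively applied unit propagation. Propagation deletes satisfied reaches, shortens others, and removes the assigned variables from \(U\). So the original reaches need not lie in \(\RN_t\), and \(V_C\) is only a valuation of the reduced universe.

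The transfer to the input formula should go through unit propagation instead. Extend \(V_C\) by the propagated unit literals, and give arbitrary values to variables that no longer occur. Every original clause is then either satisfied by a propagated literal or present in \(\RN_t\) in shortened form, and hence satisfied by \(V_C\).

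Similarly, in Case~2, transferring unsatisfiability needs the fact that unit propagation preserves satisfiability, not only the private-pivot soundness lemma. Neither point affects the decide-or-refine dichotomy itself, which your Case~3 establishes exactly as the paper does.
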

\begin{proof}
If neither decision case holds, some \(C\in\T_t\) is neither terminal nor black-blocked. It is active and therefore productive by the previous theorem.
\end{proof}

\paragraph{Unit generation.}
Let \(C\) be a token distribution and \((G,H)\) a
\(C\)-falsified mixed reach. For a fixed choice of private
witnesses \(B_{j(g)}\) satisfying
\(B_{j(g)}\cap C=\{g\}\) for every \(g\in G\),
the private-pivot tail can be written as
\[
    B_{\new}
    = H\cup\bigcup_{g\in G}
      \bigl(B_{j(g)}\setminus\{g\}\bigr).
\]
Since \(H\neq\varnothing\) for a mixed reach,
\(B_{\new}=\{u\}\) holds exactly when \(H=\{u\}\) and
\(B_{j(g)}\setminus\{g\}\subseteq\{u\}\) for every \(g\in G\).
In particular, suppose that a mixed reach \((G,\{u\})\)
is present and that, for every \(g\in G\), the white reach
\((\Source,\{g,u\})\) is also present.
For any token distribution \(C\) with \(G\subseteq C\)
and \(u\notin C\), the tails of these white reaches are
valid private witnesses, because
\(\{g,u\}\cap C=\{g\}\).
Choosing \(B_{j(g)}=\{g,u\}\) for every \(g\in G\)
therefore yields the unit \((\Source,\{u\})\).
Other valid choices of private witnesses need not yield a unit.
This explains the unit emergence pattern of
Example~4 in~\cite{KusperNagy2026RNSolver}.
We record it here only as a structural observation;
the reference algorithm below does not use a special rule
for this configuration.

\paragraph{Exploratory DNF-control observations.}
If \(C\in\DNF(\W)\), \(B_{\new}\cap C=\emptyset\), and \(b\in B_{\new}\), then \(C\cup\{b\}\) is a minimal hitting set of \(\W\cup\{B_{\new}\}\) exactly when every old element \(x\in C\) has a private witness avoiding \(b\). This gives the lost-private heuristic used in the exploratory implementation. Conversely, a partial list of DNF terms cannot in general be updated into the new DNF after adding a white tail: for \(\W=\{\{a,b\}\}\), knowing only the partial term \(\{a\}\) is insufficient after adding \(\{b,c\}\), because the true new DNF is \(\{\{b\},\{a,c\}\}\). Hence exact incremental DNF update requires the full old DNF or an independent completeness certificate.

\section{From the Original Algorithm to the Revised Algorithm}
\label{sec:comparison}

The original RN-Solver processes all token distributions generated from the current white reaches. For a non-black-blocked token distribution \(C\), it considers reaches \((C',D)\) with \(C'\subseteq C\) and \(D\ne\Sink\), constructs a private-pivot candidate, and explicitly checks whether this candidate is new and not subsumed. If no new candidate is found, \(C\) is marked as resolved. If all token distributions are black-blocked or resolved, the old algorithm returns SAT via the white-fixpoint theorem.

The revised algorithm changes this control flow in three ways. First, it uses a frozen iteration snapshot and adds a batch of new white reaches only after the iteration. Second, it learns only from \(C\)-falsified mixed reaches. Third, for those candidates it does not perform old-white subsumption testing; non-subsumption is guaranteed by the theorem above. Consequently, the resolved marker and the white-fixpoint SAT branch are not part of the revised solver.

The simplification can be summarized as follows.
\[
\begin{array}{lll}
\textbf{Old control flow} & & \textbf{Revised control flow} \\
\hline
\text{generate all }C\in\DNF(\W) && \text{generate all }C\in\DNF(\W_t)\text{ from a snapshot}\\
\text{try all }C'\subseteq C\text{ with }D\ne\Sink && \text{use only }C\text{-falsified mixed reaches}\\
\text{test candidate for old-white subsumption} && \text{skip this test by theorem}\\
\text{mark active terms as resolved/non-resolved} && \text{active terms are productive}\\
\text{return SAT at white fixpoint} && \text{decide or strictly refine}\\
\end{array}
\]

The revised solver still may remove exact duplicates inside the current batch. The theorem eliminates only subsumption against the old white snapshot; batch-local redundancy is an implementation issue.

\section{Revised Algorithm}
\label{sec:algorithm}

The snapshot/full-DNF solver below is the reference formulation of the revised algorithm and the revised configuration evaluated in Study~1:
\begin{quote}\small
\begin{enumerate}
\item Exhaustively apply unit propagation. Then return SAT for the empty network, return UNSAT if \((\Source,\Sink)\) is present, and return SAT if no non-empty white reach remains.
\item Freeze a snapshot \(\RN_t\), its white tails \(\W_t\), black reaches, and mixed reaches. Compute \(\T_t=\DNF(\W_t)\).
\item Initialize \(\Delta_t=\emptyset\). For each \(C\in\T_t\): if \(C\) is terminal, return SAT; if \(C\) is black-blocked, continue; otherwise find a \(C\)-falsified mixed reach, construct \(B_{\new}\) by~\eqref{eq:bnew}, and add it to \(\Delta_t\) without an old-white subsumption check.
\item If every \(C\in\T_t\) was black-blocked, return UNSAT. Otherwise \(\Delta_t\ne\emptyset\); add all tails in \(\Delta_t\) as white reaches and start the next iteration.
\end{enumerate}
\end{quote}
Soundness follows from unit propagation and private-pivot soundness. Termination follows because a non-decision iteration adds at least one old-white-non-subsumed white tail over the finite universe \(U\). Exact duplicate filtering may be used inside \(\Delta_t\); it does not replace the theoretical non-subsumption guarantee.

Two alternative DNF-control configurations are also implemented. Exact incremental DNF updates the previous complete DNF, with full recomputation when its update preconditions fail; antichain minimization may still be costly. Depth-first restart selects one current minimal token distribution, refines immediately, and restarts the search instead of materializing the complete current DNF in advance. Study~2 (Section~\ref{sec:study2-dnf-control}) compares their runtime and memory trade-offs. Snapshot/full-DNF remains the reference formulation; the measured ranking of implemented configurations is a separate empirical question.

\section{Empirical Evaluation}
\label{sec:empirical-evaluation}

The empirical evaluation is organized into four studies on SATLIB benchmarks~\cite{SATLIB}. Study~1 compares the original RN-Solver control flow with the revised snapshot/full-DNF subsumption-free variant. Study~2 compares the three implemented DNF-control strategies. Study~3 explores resolution-derived black enrichment, and Study~4 examines structural correlates of RN-Solver difficulty. The evaluation is not intended as a comparison with modern CDCL solvers: at this early stage, such a baseline on these small instances would primarily measure engineering maturity rather than the effect of the RN-specific refinement.

\subsection{Study 1: Original vs. Subsumption-Free Control Flow}
\label{sec:study1-control-flow}

Within the same implementation, we compare \texttt{old} and \texttt{sf-full-dnf} to evaluate the combined control-flow refinement: snapshot updates, learning from \(C\)-falsified mixed reaches, and removal of old-white subsumption checks. The counters establish elimination of the targeted check; runtime differences reflect the combined changes, not an isolated deletion.

The original instance-level measurements below provide a preliminary instrumentation check: one run per configuration, no warm-up, and an 8 second timeout. Their runtime medians include timeout-limit values; the subsequent full-benchmark study provides broader supporting evidence.

\subsubsection{Detailed Ten-Instance Comparison}
\label{sec:study1-detailed}

\paragraph{Original random 3-SAT.}
Table~\ref{tab:uf20-original} reports the first ten SATLIB \texttt{uf20-91} instances in natural filename order, from \texttt{uf20-01} through \texttt{uf20-010}; the full 1,000-instance benchmark is evaluated separately in Section~\ref{sec:uf20-full}. The revised solver eliminates the target check in every completed subsumption-free row: \[ \texttt{oldWhiteSubsumptionChecks}=0. \]

\begin{table}[!ht]
\centering
\caption{Original \texttt{uf20} subset. Times are milliseconds; 8000 denotes timeout. Timeout counters are left unreported (--) rather than interpreted as measured zeros.}
\label{tab:uf20-original}
\begin{tabular}{lrrrrrrr}
\hline
instance & white & black & mixed & old & sf-full & old checks & sf checks \\
\hline
\texttt{uf20-01}  & 10 & 11 & 70 & 921  & 261  & 343098 & 0 \\
\texttt{uf20-02}  & 11 & 13 & 67 & 24   & 21   & 341    & 0 \\
\texttt{uf20-03}  & 8  & 7  & 76 & 8000 & 8000 & -- & -- \\
\texttt{uf20-04}  & 11 & 14 & 66 & 1178 & 179  & 516279 & 0 \\
\texttt{uf20-05}  & 12 & 12 & 67 & 1134 & 109  & 404076 & 0 \\
\texttt{uf20-06}  & 11 & 8  & 72 & 8000 & 2675 & -- & 0 \\
\texttt{uf20-07}  & 13 & 12 & 66 & 327  & 217  & 53653  & 0 \\
\texttt{uf20-08}  & 11 & 10 & 70 & 29   & 22   & 1317   & 0 \\
\texttt{uf20-09}  & 14 & 15 & 62 & 8000 & 771  & -- & 0 \\
\texttt{uf20-010} & 18 & 9  & 64 & 8000 & 850  & -- & 0 \\
\hline
\end{tabular}
\end{table}

Under this reduced timeout protocol, the ratio of the median runtimes (old/sf) on the original \texttt{uf20} subset is about \(4.84\times\). The old solver performs many old-white subsumption checks on completed hard rows; the revised solver performs none.

\paragraph{Structured UNSAT sanity check.}
As a structured UNSAT baseline, we also ran \texttt{hole6.cnf} through \texttt{hole10.cnf}. The revised mode behaves similarly to the old mode on these instances: \texttt{hole6}--\texttt{hole9} remain in the same order of magnitude, and \texttt{hole10} times out in both modes under the 8 second cap. This baseline is useful for regression, but the main claim of this paper is supported more directly by the random 3-SAT subsumption counters.

\begin{table}[t]
\centering
\caption{Reported counters in the original instrumented batch. The retained summary does not establish whether debug-only checks were enabled; zero reported violations are not evidence of active lemma verification.}
\label{tab:invariant-checks}
\begin{tabular}{lrrrr}
\hline
mode & instances & complement violations & private witness violations & active resolved \\
\hline
\texttt{old}         & 45 & 0 & 0 & 0 \\
\texttt{sf-full-dnf} & 45 & 0 & 0 & 0 \\
\hline
\end{tabular}
\end{table}

The detailed measurements support the main implementation claim: the revised solver eliminates the targeted old-white subsumption check.

\subsubsection{Full \texttt{uf20-91} Fixed-Budget Validation}
\label{sec:uf20-full}

To test whether the instance-level pattern extends beyond the ten-instance table, we ran a supporting fixed-budget ablation on all 1,000 SATLIB \texttt{uf20-91} instances. Both conditions used the same frozen instrumented implementation: \texttt{old} for the original control flow and \texttt{sf-full-dnf} for the revised snapshot/full-DNF control flow. This campaign ran on an Intel Core i7-1255U with 32~GiB RAM, Windows~11, and OpenJDK~21.0.1. Each instance--mode pair was executed once in a fresh JVM, serially, with the implementation's MMCS-style builder, the neutral heuristic, identical heap settings (\texttt{-Xms64m -Xmx1g}), and an external 8 second deadline for the complete Java process. The mode executed first was alternated between successive instances. This supports the control-flow comparison rather than replacing the detailed experiment.

\begin{table}[!ht]
\centering
\small
\caption{Supporting fixed-budget validation on all 1,000 \texttt{uf20-91} instances. Runtime statistics include successful runs only.}
\label{tab:uf20-full}
\begin{tabular}{lrrrr}
\hline
mode & solved & timeouts & mean ms & median ms \\
\hline
\texttt{old}         & 724/1000 & 276 & 1500.6 & 731.8 \\
\texttt{sf-full-dnf} & 963/1000 & 37  & 851.3  & 430.9 \\
\hline
\end{tabular}
\end{table}

Under the same fixed budget, \texttt{sf-full-dnf} solved 963 instances (96.3\%), compared with 724 (72.4\%) for \texttt{old}, a difference of 23.9 percentage points. The paired outcomes were 724 solved by both modes, 0 by \texttt{old} only, 239 by \texttt{sf-full-dnf} only, and 37 by neither; a supplementary two-sided exact McNemar test gives \(p=2.26\times 10^{-72}\). The paired counts are the primary descriptive result for this fixed benchmark. On the 724 instances solved by both modes, the median runtime decreased from 731.8~ms to 356.1~ms. The median per-instance old/sf runtime ratio was \(1.783\times\), and its geometric mean was \(2.180\times\). Because the independently successful sets differ, their separate means in Table~\ref{tab:uf20-full} are descriptive rather than a direct paired speed comparison.

All 1,687 successful SAT models produced across the two modes were independently checked against the 91 input clauses and were valid, with no paired SAT/UNSAT disagreement. Each of the 963 completed \texttt{sf-full-dnf} runs reported zero old-white subsumption checks. Peak working-set memory was measured externally. On the same 724 instances solved by both modes, the median peak working set was 175.3~MiB for \texttt{old} and 98.5~MiB for \texttt{sf-full-dnf} (means 174.3 and 121.8~MiB, respectively); these values measure the peak working set of the complete JVM process, not Java heap allocation. The experiment used one cold-JVM measurement per instance--mode pair, so the runtime includes process/JVM overhead and does not estimate repeated-run variance. These results support, rather than replace, the more detailed instance-level experiment above.

\subsection{Study 2: DNF-Control Strategy Comparison}
\label{sec:study2-dnf-control}

We compared the three implemented configurations, snapshot/full-DNF (\texttt{sf-full-dnf}), exact incremental DNF (\texttt{sf-inc-dnf}), and depth-first restart (\texttt{sf-depth-restart}), on the next 100 \texttt{uf20-91} instances after a ten-instance strategy pilot. The frozen holdout comprises natural filename positions 11--110, from \texttt{uf20-011.cnf} to \texttt{uf20-0110.cnf}; strategy-pilot measurements were not pooled with it. This holdout is disjoint from that pilot, not from the 1,000-instance set in Study~1. Each instance--configuration pair was run three times in a fresh JVM, giving 900 serial executions on Windows~11 with OpenJDK~21.0.1. All runs used the same frozen v4 build, \texttt{-Xms64m -Xmx1g}, the neutral heuristic, no debug checks or batch minimization, and an external 8 second full-process deadline. Mode order rotated so that each configuration occupied each position once per instance. Full enumeration and incremental initialization/fallback used the implementation's MMCS-style builder; depth-restart used its own DFS oracle despite accepting the same dualization option. Incremental retained the 1,000,000-candidate limit with full-DNF fallback enabled. These are implemented-configuration comparisons, not a pure traversal ablation: enumeration order, duplicate handling, private-witness selection, and batch versus immediate learning also differ.

\begin{table}[!ht]
\centering
\small
\setlength{\tabcolsep}{5pt}
\renewcommand{\arraystretch}{1.12}
\caption{Study~2 on 100 instances with three repetitions per configuration. Instance counts require completion in all three repetitions; run counts include all repetitions. Capped wall time and peak working set use all 300 runs per configuration, including timeouts; solver time uses completed runs only. All time and memory entries are medians.}
\label{tab:dnf-control-strategies}
\begin{tabular}{lrrrrr}
\hline
\noalign{\vskip 3pt}
Configuration & \shortstack[r]{Instances\\(3/3)} & \shortstack[r]{Completed\\runs} & \shortstack[r]{Capped wall\\(ms)} & \shortstack[r]{Solver\\(ms)} & \shortstack[r]{Peak\\(MiB)} \\[2pt]
\hline
Full-DNF      & 94/100  & 282/300 & 545.1  & 186.0 & 115.4 \\
Incremental   & 100/100 & 300/300 & 403.8  & 120.0 & 67.5  \\
Depth-restart & 60/100  & 187/300 & 4438.5 & 752.0 & 89.9  \\
\hline
\end{tabular}
\end{table}

Table~\ref{tab:dnf-control-strategies} reports completion and resource use. The primary problem units are the 100 instances, not the 300 repeated runs per configuration. Capped wall time is \(\min(t_{\mathrm{wall}},8000\text{ ms})\), not an uncensored solve-time estimate. Peak memory is the externally measured lifetime Windows \texttt{PeakWorkingSetSize} of the complete JVM, including observations terminated at the deadline; it is not Java heap allocation. Incremental completed all repetitions of all 100 instances, whereas full-DNF did so for 94 and depth-restart for 60. Depth-restart completed at least once on 64 instances and timed out in 113 of its 300 runs. All 769 completed runs returned independently validated SAT models, with no UNSAT answer, process error, or out-of-memory error. Three additional valid models emitted by processes that exceeded the deadline remained classified as timeouts.

On the 282 full/incremental paired completions across 94 instances, the median incremental/full wall-time ratio was 0.807 (95\% CI \([0.767,0.855]\)); the corresponding median internal solver-time ratio was 0.665. The median paired peak-working-set ratio was 0.597 (95\% CI \([0.498,0.623]\)). Values below one favor incremental. These percentile intervals use 2,000 instance-cluster bootstrap resamples~\cite{FieldWelsh2007}, retaining eligible paired repetitions together; the ratios are conditional on both configurations completing. Genuine incremental updates occurred in 276/300 runs (92\%), and no full-DNF fallback occurred. The remaining 24 runs required only initial full enumeration. Thus the observed incremental advantage was not an artifact of frequent fallback. Full-DNF also had a heavier memory tail: its maximum observed peak working set was 642.1~MiB, compared with 114.5~MiB for incremental.

Depth-restart remained less effective under this budget. On its 187 paired completions with full-DNF, the median depth/full wall-time ratio was 1.788. Oracle search accounted for a median 90.6\% of measured solver time in the 187 completed depth runs, exceeding half the solver time in 161 of them; no phase allocation is inferred for timeouts. Its median paired peak-working-set ratio was 0.841 relative to full-DNF but 1.354 relative to incremental, so depth-restart was not the lowest-memory configuration. These supporting results identify incremental DNF as the most effective configuration on this holdout while retaining snapshot/full-DNF as the reference formulation of the revised algorithm. The findings concern a deterministic, contiguous set of small satisfiable instances; they do not establish a universal strategy ranking, and completed-only timing summaries describe different selected subsets.

\subsection{Study 3: Exploratory Resolution-Derived Black Enrichment}
\label{sec:study3-black-enrichment}

A white-enrichment control experiment with random positive clauses did not make random 3-SAT easier and is omitted from the main tables. A more relevant exploratory experiment enriches the black side, but only with clauses derived from the original formula by bounded resolution. Thus every added all-negative clause has a recorded resolution derivation and preserves the original SAT/UNSAT status.

\begin{table}[!ht]
\centering
\caption{Resolution-derived black enrichment on the reduced \texttt{uf20} subset.}
\label{tab:derived-black}
\begin{tabular}{lrrrrrrr}
\hline
 target & reached & ratio & added & deriv. ms & old & sf-full & sf DNF terms \\
\hline
 original & 10/10 & 0.126 & 0 & 0  & 664 & 138 & 3988 \\
 5\%      & 10/10 & 0.126 & 0 & 23 & --  & --  & --   \\
 10\%     & 10/10 & 0.126 & 0 & 23 & --  & --  & --   \\
 15\%     & 10/10 & 0.153 & 3 & 38 & --  & --  & --   \\
 20\%     & 10/10 & 0.202 & 8 & 44 & 454 & 136 & 2555 \\
\hline
\end{tabular}
\end{table}

The result is suggestive but not conclusive. The derived black clauses reduce median materialized DNF terms in \texttt{sf-full-dnf} from 3988 to 2555 at the 20\% target, and the old solver improves from 664 ms to 454 ms. However, the measured \texttt{blackBlockedCount} and \texttt{dnfBranchesPrunedByBlack} counters remain at zero in the reduced satisfiable benchmark set. Thus the current data do not justify the stronger claim that the improvement is caused by increased black-aware pruning. The effect may occur through the shape of the DNF search or through other simplifications; larger runs and finer instrumentation are needed.

\subsection{Study 4: Structural Correlates of Solver Difficulty}
\label{sec:study4-structural-correlates}

As a post-hoc supporting analysis, we applied unchanged \texttt{CnfStats} to all 1,000 formulas and joined the clause-sign counts to Study~1 outcomes; no solver was rerun. Main inference excluded the ten hypothesis-generating pilot instances (\(n=990\)); this is not an independent replication. We used two-sided, tie-corrected Mann--Whitney tests with Holm correction~\cite{Holm1979} across six prespecified features within each of three comparison families (success versus timeout separately by mode, and SF-rescue versus both-timeout). The features were white and black counts, their sum, their absolute difference, definite-Horn count, and positive-literal count. Negative Cliff's \(\delta\)~\cite{Cliff1993} denotes lower values in the timeout group. The clearest association was black-clause count: at 8 seconds, the successful/timeout medians were 12/9 for \texttt{old} (\(\delta=-0.438\), Holm-adjusted \(p=5.0\times10^{-26}\)) and 11/8 for \texttt{sf-full-dnf} (\(\delta=-0.606\), \(p=3.33\times10^{-9}\)). SF-rescued cases likewise had more black clauses than cases timing out in both modes (medians 10/8; \(\delta=-0.359\), \(p=0.00285\)). Logistic models used three nonredundant standardized counts: white, black, and definite-Horn. At fixed white and definite-Horn counts and 91 clauses, increasing black count replaces clauses with two positive and one negative literal. A one-standard-deviation increase had timeout odds ratios of 0.381 for \texttt{old} (Wald 95\% CI \([0.315,0.461]\)) and 0.237 for \texttt{sf-full-dnf} (\([0.147,0.379]\)); the adjusted white-count association was inconclusive. Only 36 holdout timeouts informed the latter model.

Sensitivity analysis reclassified the same full-process runtimes at 4, 2, and 1 seconds: these are dependent recensorings, not new solver experiments. The association of fewer black clauses with timeout persisted for both modes and the rescue comparison, including Holm adjustment across all 72 feature/comparison/deadline tests, although effect sizes generally decreased at shorter deadlines. These findings are exploratory and non-causal. \texttt{CnfStats} measures clause-sign composition, not incidence topology, white-tail overlap, hitting-set counts, DNF size, or private-witness structure. Black-clause abundance is therefore a correlate of observed RN-Solver difficulty, not evidence that increasing it necessarily causes faster solving.

\section{Conclusion}
\label{sec:conclusion}

We presented a subsumption-free refinement of RN-Solver. The key result is that when private-pivot learning is restricted to \(C\)-falsified mixed reaches, the generated white tail is disjoint from \(C\), while every old white tail intersects \(C\). Therefore, no old white reach can subsume the generated candidate, and a runtime subsumption check is replaced by a structural theorem.

The revised snapshot/full-DNF solver also simplifies the logical control flow. In a frozen snapshot, every active token distribution is productive: if it is not terminal and not black-blocked, it yields a \(C\)-falsified mixed reach and hence a non-subsumed private-pivot refinement. The old resolved marker and white-fixpoint SAT branch are therefore unnecessary in the revised algorithm. Each non-decision iteration performs a strict white refinement and eliminates the token distribution that generated the new reach.

Study~1 confirms the intended elimination of old-white subsumption checks and shows an increase in 8 second coverage from 724 to 963 of 1,000 \texttt{uf20-91} instances. Study~2 identifies exact incremental DNF as the most effective tested configuration on its 100-instance strategy holdout: it completed all three repetitions of every instance and reduced paired runtime and peak working set relative to full-DNF. Snapshot/full-DNF remains the reference formulation; the strategy ranking is empirical and benchmark-specific. Studies~3 and~4 provide exploratory evidence about black-side enrichment and clause-sign associations, not a general causal explanation. RN-Solver remains a proof-of-concept rather than a competitor to optimized CDCL solvers.

Future work has four immediate directions. First, the resolution-derived black-enrichment experiment should be repeated on larger SAT and UNSAT benchmarks with finer instrumentation. Its preliminary reduction in DNF materialization suggests that derived black consequences can influence the search space, even when the current black-pruning counters remain flat. Second, this observation motivates a genuinely bidirectional RN-Solver, where black reaches are generated dynamically from the Sink side, symmetrically to the current generation of white reaches from Source. Third, the incremental advantage should be evaluated on larger, more diverse SAT and UNSAT families, while depth-first and unit-oriented DNF control require more efficient oracles that certify minimality without shifting the cost from materialization to search.

A fourth direction is polarity renaming: globally complement selected variables, seeking to increase the initial number of black reaches. This preserves satisfiability through a bijection of assignments but changes clause-sign composition. Study~4 motivates, but does not establish, a performance benefit. A dedicated paired comparison should keep solver settings fixed, include preprocessing cost, map models back to the original variables, and account for the other clause-sign changes induced by renaming.

\bibliographystyle{eptcs}
\bibliography{rn_subsumption_free_draft_v3_9}

\end{document}